\documentclass[submission,copyright,creativecommons]{eptcs}

\usepackage[utf8]{inputenc}
\usepackage[T1]{fontenc}
\usepackage{amssymb}
\usepackage{amsthm}
\usepackage{mathtools}
\usepackage{extarrows}
\usepackage{xspace}
\usepackage{wrapfig}
\usepackage[title]{appendix}
\usepackage{caption}
\usepackage{mathpartir}
\usepackage{siunitx}
\usepackage{cite}

\newcommand{\defeq}{\vcentcolon=}

\newcommand{\type}[3]{#1 \Vdash #2 : #3}

\newcommand{\bs}[2]{#1 \downarrow #2}

\newcommand{\txttt}[1]{\textsf{\upshape #1}}
\newcommand{\Split}{\txttt{KSplit}}
\newcommand{\Hole}{\txttt{KHole}}

\newcommand{\Assert}{\txttt{KAssert}}
\newcommand{\Trivial}{\txttt{KTrivial}}
\newcommand{\EqRefl}{\txttt{KEqRefl}}
\newcommand{\Destruct}{\txttt{KDestruct}}

\newcommand{\IntroQuant}{\txttt{KIntroQuant}}
\newcommand{\InstQuant}{\txttt{KInstQuant}}
\newcommand{\IntroType}{\txttt{KIntroType}}
\newcommand{\InstType}{\txttt{KInstType}}

\newcommand{\Rewrite}{\txttt{KRewrite}}
\newcommand{\Induction}{\txttt{KInduction}}

\newcommand{\bool}{\textsl{bool}}

\newcommand{\cert}{\textsl{cert}}
\newcommand{\pterm}{\textsl{term}_{poly}}
\newcommand{\mterm}{\textsl{term}_{mono}}
\newcommand{\ty}{\textsl{type}}
\newcommand{\ident}{\textsl{ident}}

\newtheorem{definition}{Definition}
\newtheorem{theorem}[definition]{Theorem}
\newtheorem{prop}[definition]{Proposition}
\newtheorem{example}[definition]{Example}

\usepackage{tikz}
\usetikzlibrary{arrows, shapes.misc, positioning}
\usetikzlibrary{matrix,calc,shapes,arrows,shadows,patterns,shapes.multipart}
\tikzset{>=stealth'}

\definecolor{thegray}{rgb}{0.9,0.9,0.9}
\definecolor{colorspec}{rgb}{0,0,0.797}
\definecolor{thered}{rgb}{0.797,0,0}
\definecolor{darkgreen}{rgb}{0.797,0,0}
\definecolor{theblue}{rgb}{0,0,0.797}
\definecolor{darkgray}{rgb}{0.8477,0.8477,0.8477}
\definecolor{ocaml-bg}{rgb}{0.9,0.9,0.9}
\definecolor{thegraygray}{rgb}{0.5,0.5,0.5}

\tikzstyle{module} = [rectangle, draw, top color=gray!10, bottom color=gray!10,
    text centered, rounded corners]
\tikzstyle{line} = [line width=1.5pt, >=latex, ->, color=thered, font=\sffamily,
    shorten <=2pt, shorten >=2pt]

\setlength{\emergencystretch}{2pt}

\title{A Framework for Proof-carrying Logical Transformations}

\author{Quentin Garchery
\institute{Université Paris-Saclay, CNRS, Inria, LMF, 91405, Orsay, France}
}

\begin{document}
\maketitle

\begin{abstract}
  In various provers and deductive verification tools, logical transformations
  are used extensively in order to reduce a proof task into a number of simpler
  tasks. Logical transformations are often part of the trusted base of such
  tools. In this paper, we develop a framework to improve confidence in their
  results. We follow a modular and skeptical approach: transformations are
  instrumented independently of each other and produce certificates that
  are checked by a third-party tool. Logical transformations are considered in a
  higher-order logic, with type polymorphism and built-in theories such as
  equality and integer arithmetic. We develop a language of proof certificates
  for them and use it to implement the full chain of certificate generation and
  certificate verification.
\end{abstract}


\section{Introduction}
\label{sec:introduction}

\paragraph{General Context and Motivation.}
Verifying a program is meant to improve its soundness guarantees
and relies on the trust towards the verification tool. Given how difficult it
can be to verify relatively simple programs, most tools try to simplify this
process and to make it as automatized as possible, which can drastically extend
their trusted code base.

Consider deductive program verification, where the program to verify is
annotated, and, in particular, given a specification. In this setting, the code
is analyzed against its specification thus generating \emph{proof tasks},
logical statements upon which depends the program correctness. To discharge a
proof task, one can first apply a \emph{logical transformation} which reduces it
to a number of new proof tasks which are hopefully easier to discharge.
Transformations are powerful tools that can, for example, be applied to
translate a task into a prover's logic before calling~it.

The main objective of this article is to improve trust in those logical
transformations. This work is decisive because logical transformations are
general and can be used in many different settings. We apply our method to the
deductive program verification tool Why3~\cite{bobot14sttt}, which makes
extensive use of logical transformations. In fact, they are at the core of its
interactive theorem proving feature and are necessary to be able to encode proof
tasks into the logic of one of the dozens of third-party automatic theorem
provers available inside Why3. The implementation of the transformations adds up
to a total of more than \num{17000} lines of OCaml code. This code being in the
trusted Why3 code base, it represents an interesting case study.

\begin{example}\label{intro_example}
Suppose that we have a proof task where we have to prove that $p~ (y * y)$ holds
for any integer $y$ of the form $y = 2 * x + 1$ and any integer predicate $p$
that satisfies the hypothesis $H$ stating that $\forall i : int. ~ p~ (4 * i +
1)$.
Finding how to instantiate the hypothesis $H$ is difficult or even impossible for
some automatic theorem provers so we cannot automatically discharge this
task. The transformation \texttt{instantiate}, defined in Why3, simply
instantiates an hypothesis and when called with arguments $H$ and $x*x+x$ on the
given task, produces the same task but with the added hypothesis that states
that $p~(4*(x*x+x)+1)$. Provers won't have to instantiate the hypothesis $H$ to
discharge this new task. In fact, this task can now be discharged by theorem
provers capable of handling arithmetic goals on condition of translating it into
the logic of the prover in question. This translation is also being done with the
help of transformations.
The entire process described in this example is to be trusted to ensure
correctness of the initial program.
\end{example}

\paragraph{Contributions and Overview.}
In this article, we describe a practical framework to validate logical
transformations. In order to define what it means for a transformation to be
correct, the logical setting of proof tasks is detailed in
Section~\ref{setting}. We follow a skeptical
approach~\cite{barendregt2002autarkic}: certificates, defined in
Section~\ref{certificates}, are generated every time a transformation is applied
and are checked independently at a later time. Contrary to the autarkic
approach, used for example for some automatic theorem provers~\cite{Lescuyer11},
which would consist here in verifying directly the transformations, the
skeptical approach has the benefit of not fixing the implementation of the
transformations. Our work is based on certificates with holes, a notion that is,
to our knowledge, new in the setting of the skeptical approach. This allows for
modular development, where certificates can be built incrementally and
transformations can be composed and defined independently. We extend our
framework with some key interpreted theories in Section~\ref{interp} and show
how to do so for any other interpreted theory along the way. The checkers for
our certificates can also be defined independently, as it is done in
Section~\ref{checkers}. In fact, we designed two checkers and one of them is
based on Lambdapi/Dedukti~\cite{lambdapi16}, an off-the-shelf proof
assistant. This has also led us to develop a translation procedure for proof
tasks into the $\lambda\Pi$-Calculus modulo rewriting. This approach, while
applicable to logical transformations in general, has been applied to the
program verification tool Why3 for a number of its transformations including
transformations dealing specifically with higher-order logic. We conclude this
article by evaluating this application to Why3 in Section~\ref{experiments}. The
source code for the whole work described in this article is available in the
Why3 repository~\cite{branchpxtp}.

\section{Logical Setting} \label{setting}

We present the logical setting used throughout this article. The goal here is
to define logical transformations and the proof tasks they are applied to.

\subsection{Types and Terms} \label{terms}

Proof tasks are formed from typed terms and those terms are meant to designate
both the terms from the program and the formulas stating properties about
them. We use the Hindley-Milner type system~\cite{milner1978theory} except that
our terms are explicitly quantified over types. Names are taken from an
infinite set of available identifiers which is designated by $\ident$.

Types are described by a \emph{type signature} $I$, a set of pairs of the form
`$\iota : n$' composed of an $\ident$ called type symbol and an integer
representing its arity. Sets are denoted by separating their elements with
commas. Note that according to the following grammar, type symbols are always
completely applied.
\[
\begin{array}{cclcr}
\ty & ::= & \alpha      &\quad\quad& \text { type variable} \\
   & |   & prop                  && \text { type of formulas}\\
   & |   & \ty \leadsto \ty        && \text { arrow type} \\
   & |   & \iota (\ty, \dots, \ty) && \text { type symbol application}\\
\end{array}
\]

Terms have polymorphic types and new terms can be built by quantifying over
terms of any type. Quantification over type variables is explicit and restricted
to only be in the prenex form.  Note that the application uses the Curry
notation, i.e., a function term is applied to a single argument term at a
time. The application is left-associative and the type arrow $\leadsto$ is
right-associative.

 \[
\begin{array}{cclcr}
\pterm &  \vcentcolon \vcentcolon = & \mterm  &\quad\quad&  \\
   & |   & \Pi \alpha. ~ \pterm                 && \text {type quantifier} \\
   &     & \\
\mterm &  \vcentcolon \vcentcolon = & x            && \text {variable} \\
   & |   & \top                                   && \text {true formula}\\
   & |   & \bot                                   && \text {false formula}\\
   & |   & \lnot ~ \mterm                          && \text {negation}\\
   & |   & \mterm ~ op ~ \mterm                      && \text {logical binary operator}\\
   & |   & \mterm ~ \mterm                          && \text {application} \\
   & |   & \lambda x : type. ~ \mterm                && \text {anonymous function} \\
   & |   & \exists x : type. ~ \mterm                && \text {existential quantifier} \\
   & |   & \forall x : type. ~ \mterm                && \text {universal quantifier} \\

   &     & \\
op &  \vcentcolon \vcentcolon= & \land                     && \text {conjonction} \\
   & |    & \lor                      && \text {disjunction} \\
   & |    & \Rightarrow           && \text {implication} \\
   & |    & \Leftrightarrow       && \text {equivalence} \\
\end{array}
\]

The \emph{(term) substitution} of variable $x$ by term $u$ in term $t$ is
written $t[x \mapsto u]$ and $t[\alpha \mapsto \tau]$ is the \emph{(type)
substitution} of type variable $\alpha$ by type $\tau$ in term $t$. A {\em
  signature} $\Sigma$ is a set of pairs of the form `$x : \tau$' composed of a
variable and its type; this type should be understood as quantified over all of
its type variables.

\begin{definition}[Typing]
  We write ${\type{I ~|~ \Sigma}{t}{\tau}}$ when the term~$t$ has no free type
  variables and has type~$\tau$ in type signature~$I$ and signature~$\Sigma$. We
  omit $I$ when it is clear from the context.
\end{definition}

The complete set of rules defining this predicate is given in
Appendix~\ref{trules}. Remark that, in the case where $t$ is an element of
$\mterm$ then the predicate ${\type{\Sigma}{t}{\tau}}$ implies that $t$ has no
type variables: $t$ is monomorphic.

\subsection{Proof Tasks} \label{tasks}

Proof tasks represent sequents in higher-order logic, they are formed from two
sets of premises: a set of hypotheses and a set of goals. A
\emph{premise} is a pair of the form `$P : t$' composed of an $\ident$ and
a $\pterm$ representing a formula.

\begin{definition}[Proof Task]
  Let $I$ be a type signature, $\Sigma$ be a signature, $\Gamma$ and $\Delta$ be
  sets of premises. {\em Proof tasks} are denoted by
  ${I ~|~ \Sigma ~|~ \Gamma \vdash \Delta}$ which represents the sequent where
  goals, given by $\Delta$, and hypotheses, given by $\Gamma$, are written in
  the signature~$\Sigma$ with types in $I$. We allow ourselves to omit $I$ and,
  possibly, $\Sigma$, when they are clear from the context.
\end{definition}

A task ${T \defeq I ~|~ \Sigma ~|~ \Gamma \vdash \Delta}$ is said to be
\emph{well-typed} when every premise $P : t$ from $\Gamma$ or $\Delta$ is
such that $\type{I ~|~ \Sigma}{t}{prop}$. The \emph{validity} of a task is only
defined when it is well-typed. In this case, the task $T$ is said to be valid
when every model of~$I$, $\Sigma$ and every formula in $\Gamma$ is also a model
of some formula in~$\Delta$.

\begin{example}
  Consider the task ${I ~|~ \Sigma ~|~ \Gamma \vdash \Delta}$ with
  \begin{align*}
    I \defeq ~& color : 0, ~set : 1 \\
    \Sigma \defeq~ &red : color(), ~green : color(), ~blue : color(), \\
             &empty : set (\alpha),
    ~add : \alpha \leadsto set(\alpha) \leadsto set (\alpha),\\
             &mem : \alpha \leadsto set (\alpha) \leadsto prop\\
    \Gamma \defeq ~& H_1 : \Pi \alpha. ~\forall x : \alpha. ~\forall y : \alpha.~
                    \forall s : set (\alpha). ~mem ~x ~s \Rightarrow
                    mem ~x ~(add ~y ~s),\\
             &H_2 : \Pi \alpha. ~\forall x : \alpha. ~\forall s : set (\alpha).~
               mem ~x ~(add ~x ~s)\\
    \Delta \defeq ~& G : mem ~green ~ (add ~red ~(add ~ green ~empty))
  \end{align*}
  This task defines the types $color$ and $set$ with associated symbols $red$,
  $green$, $blue$, $empty$, $add$ and $mem$. The type symbol declaration $set :
  1$ defines a type symbol $set$ of arity~$1$ for polymorphic sets. The
  signature declaration $add : \alpha \leadsto set (\alpha) \leadsto set
  (\alpha)$ allows us to declare a function that can be used to add an element
  of any type to a set containing elements of the same type. This task also
  defines hypotheses such that the predicate $mem$ holds if its first argument
  is contained in the second argument. For instance, the hypothesis~$H_2$ is
  applicable to sets of any type, and states that every set contains the element
  that has just been added to it. With the given goal, this task is valid.
\end{example}

\subsection{Logical Transformations} \label{transfos}

A \emph{logical transformation} is a function that takes a task as input and
returns a list of tasks. Lists are denoted by separating their elements with
semicolons. We say that a transformation is applied on an \emph{initial task}
and returns \emph{resulting tasks}.
A transformation can fail, in this case the whole process is terminated and
we do not have to prove the correctness of the application.
If a transformation succeeds, we want the verification to be based on the
validity of the resulting tasks and to be able to forget about the initial
task. This is why we say that a \emph{transformation application is correct} when the
validity of each resulting task implies the validity of the initial task.
In Example~\ref{intro_example}, the transformation \texttt{instantiate} returns
the initial task modified by adding the instantiated hypothesis to it.  When a
transformation application is correct, it only remains to prove that this
resulting task is valid in order to make sure that the initial task is also
valid. This is our approach: we certify transformation applications, thus
relating initial and resulting tasks.

\section{Certificates} \label{certificates}

To verify a transformation, we instrument it to produce a certificate and check
each application of the transformation thanks to the corresponding
certificate. We first define our own certificate format with the goal of making
the verification of those certificates as easy as possible. We show how to
improve modularity and ease of use in a second time in Section~\ref{elab}.

\subsection{Syntax}

An excerpt of the recursive definition of certificates is given in
Figure~\ref{fig:cert}. More certificates will be detailed on their own in
Section~\ref{interp} and the others won't be presented in this article for
brevity. We call these certificates the \emph{kernel certificates}. To make
certificates easier to check, we design them in such a way that they are very
precise. For example, the Boolean values indicate whether the premise is an
hypothesis or a goal. Moreover, the kernel certificates have voluntarily been
kept as elementary as possible and this makes it easier to trust them. In
particular, this approach makes it easier to check every case (about 20 of them)
when proving by induction a property of correctness of kernel certificates, as
it is done in paragraph~\ref{proofterm}.

The certificates can contain tasks and each $\Hole$ node carries one of those
tasks. When $c$ is a certificate, \emph{the leaves of}~$c$ designate the list of
all tasks obtained by collecting them (in the $\Hole$ nodes) when doing an
in-order traversal of the certificate tree. The leaves of a certificate are
meant to be, in the end, the resulting tasks of the transformation it is
certifying. A certificate with holes associated to a transformation application
can be checked without needing to wait for the proof of the returned tasks to
fill its holes, and this is what makes our certificates original. This design
choice has been guided by our will for modularity: we want to progressively
certify logical transformations.

\begin{figure}[tp]
  \centering
  \[
  \begin{array}{rcl}
    \cert & \vcentcolon \vcentcolon = &
    \Hole(\textsl{task})\\
    & \mid &
    \Trivial(\bool, \ident) \\
    & \mid &
    \Assert(\ident,\pterm,\cert,\cert)\\
    & \mid &
    \Split(\bool, \mterm, \mterm, \ident,\cert,\cert)\\
    & \mid &
    \Destruct(\bool, \mterm, \mterm, \ident,\ident,\ident,\cert)\\
    & \mid &
    \IntroQuant(\bool, \ty, \mterm, \ident, \ident,\cert)\\
    & \mid &
    \InstQuant(\bool, \ty, \mterm, \ident,\ident,\mterm,\cert) \\
    & \mid &
    \IntroType(\pterm, \ident, \ident, \cert)  \\
    & \mid &
    \InstType(\pterm, \ident,\ident,\ty,\cert) \\
    & \mid &
    \dots \\

  \end{array}
  \]
  \caption{Definition of Kernel Certificates (excerpt)}
  \label{fig:cert}
  \hrulefill
\end{figure}

\begin{figure}[t]
    \centering
\begin{mathpar}
  \inferrule{~}
  {\bs{\Gamma\vdash \Delta}
    {\Hole(\Gamma \vdash \Delta)}}

  \inferrule{~}
  {\bs{\Gamma, H:\bot \vdash \Delta}
    {\Trivial(false, H)}}

  \inferrule{~}
  {\bs{\Gamma \vdash \Delta, G:\top}
    {\Trivial(true, G)}}

  \inferrule{\bs{\Sigma ~|~ \Gamma \vdash \Delta, P : t}{c_1} \\
    \bs{\Sigma ~|~ \Gamma, P : t \vdash \Delta}{c_2}\\
    \type{\Sigma}{t}{prop}}
  {\bs{\Sigma ~|~ \Gamma \vdash \Delta}
    {\Assert(P,t,c_1,c_2)}}

  \inferrule{\bs{\Gamma, H : t_1 \vdash \Delta}{c_1}\\
    \bs{\Gamma, H : t_2 \vdash \Delta}{c_2}}
  {\bs{\Gamma, H : t_1 \lor t_2 \vdash \Delta}
                  {\Split(false, t_1, t_2, H,c_1,c_2)}}

  \inferrule{\bs{\Gamma \vdash \Delta, G : t_1}{c_1}\\
    \bs{\Gamma \vdash \Delta, G : t_2}{c_2}}
  {\bs{\Gamma \vdash \Delta, G : t_1 \land t_2}
                  {\Split(true, t_1, t_2, G,c_1,c_2)}
                  }

  \inferrule{\bs{\Gamma, H_1 : t_1, H_2 : t_2 \vdash \Delta}{c}}
  {\bs{\Gamma, H : t_1 \land t_2 \vdash \Delta}
                 {\Destruct(false, t_1, t_2, H,H_1,H_2,c)}
                 }

  \inferrule{\bs{\Gamma \vdash \Delta, G_1 : t_1, G_2 : t_2}{c}}
  {\bs{\Gamma \vdash \Delta, G : t_1 \lor t_2}
                 {\Destruct(true, t_1, t_2, G,G_1,G_2,c)}
                 }

  \inferrule{\bs{\Sigma, y : \tau ~|~ \Gamma, H : t [x \mapsto y] \vdash \Delta}{c}\\
    y \text{ is fresh w.r.t. } \Sigma, \Gamma, \Delta, t}
  {\bs{\Sigma ~|~ \Gamma, H : \exists x : \tau . ~t \vdash \Delta}
                  {\IntroQuant(false, \tau, \lambda x : \tau. ~t,H,y,c)}
                  }

  \inferrule{\bs{\Sigma, y : \tau ~|~ \Gamma\vdash \Delta, G : t [x \mapsto y]}{c}\\
    y \text{ is fresh w.r.t. } \Sigma, \Gamma, \Delta, t}
  {\bs{\Sigma ~|~ \Gamma\vdash \Delta, G : \forall x : \tau. ~t}
                  {\IntroQuant(true, \tau, \lambda x : \tau. ~t,G,y,c)}
                  }

  \inferrule{\bs{\Sigma ~|~ \Gamma, H_1 : \forall x : \tau. ~t, H_2 : t[x \mapsto u] \vdash \Delta}{c}\\
    \type{\Sigma}{u}{\tau}}
  {\bs{\Sigma ~|~ \Gamma, H_1 : \forall x : \tau. ~t \vdash \Delta}
                 {\InstQuant (false, \tau, \lambda x : \tau. ~t,H_1,H_2,u,c)}
                 }

  \inferrule{\bs{\Sigma ~|~ \Gamma \vdash \Delta, G_1 : \exists x : \tau. ~t, G_2 : t [x \mapsto u]}{c}\\
    \type{\Sigma}{u}{\tau}}
  {\bs{\Sigma ~|~ \Gamma \vdash \Delta, G_1 : \exists x : \tau. ~t}
                 {\InstQuant(true, \tau, \lambda x : \tau. ~t,G_1,G_2,u,c)}
                 }

  \inferrule{\bs{I, \iota : 0 ~|~ \Sigma ~|~ \Gamma\vdash \Delta, G : t [\alpha \mapsto \iota]}{c}\\
    \iota \not\in I}
  {\bs{I ~|~ \Sigma ~|~ \Gamma\vdash \Delta, G : \Pi \alpha.~ t}
                  {\IntroType(\Pi \alpha. ~t, G,\iota,c)}
                  }

  \inferrule{\bs{\Gamma, H_1 : \Pi \alpha. ~t, H_2 : t [\alpha \mapsto \tau] \vdash \Delta}{c}\\
    \tau \text{ has no type variables}}
  {\bs{\Gamma, H_1 : \Pi \alpha.~ t \vdash \Delta}
                 {\InstType (\Pi \alpha. ~t, H_1,H_2,\tau,c)}
                 }
\end{mathpar}
\caption{Certificate Rules (excerpt)}
\label{fig:crules}
\hrulefill
\end{figure}

\subsection{Semantics}\label{semcert}

The semantics of certificates is defined by a binary predicate $\bs{T}{c}$,
linking the initial task~$T$ to a certificate~$c$. Informally, the predicate
$\bs{T}{c}$ holds if $c$ represents a proof of the fact that the validity of the
leaves of $c$ implies the validity of $T$. In Figure~\ref{fig:crules}, we give
the rules that cover the certificates from Figure~\ref{fig:cert}. In this sense,
the rules are only an excerpt of the complete set of rules (given in
Appendix~\ref{crules}) defining the predicate $\bs{T}{c}$. Notice that
some of the certificates have dual rules for the hypotheses and the goals.

The certificate $\Hole$ is used to validate transformations that have resulting
tasks and can be used directly for the identity transformation. The certificate
$\Trivial$ is used to validate a transformation application that has no
resulting task when the initial task contains a trivial premise. The
$\Assert$ certificate allows to introduce a cut on a polymorphic
formula. Remember that the side condition implies that this formula cannot have
free type variables. The certificates $\Split$ and $\Destruct$ are used to
validate a transformation application that first splits a
premise~$H$. Certificate $\IntroQuant$ is used to introduce the variable of a
quantified premise. The certificate $\InstQuant$ is used to instantiate a
quantified premise with a term and the side condition ensures that this term is
monomorphic. The certificates $\IntroType$ and $\InstType$ are used to deal with
type-quantified premises.
\begin{example}
  Let~$T$ and $T_{inst}$ denote, respectively, the initial and the resulting
  task from Example~\ref{intro_example}, and let~$H_{inst}$ be the name of the
  new instantiated hypothesis. Suppose that symbol~$plus$, symbol~$mult$ and
  type symbol~$int$ are defined in the signature and type signature, then we
  have $\bs{T}{c}$ with
  \begin{align*}
    c \defeq \InstQuant (&false, int, \lambda i : int. ~p~ (plus~ (mult~ 4~ i) ~1), \\
    &H, H_{inst}, int, plus ~ (mult ~ x ~x) ~x, \Hole (T_{inst}))
  \end{align*}
\end{example}

\begin{prop}
  If $T$ is well-typed then every task in a derivation $\bs{T}{c}$ is also
  well-typed.
\end{prop}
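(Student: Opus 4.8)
The plan is to argue by structural induction on the derivation of $\bs{T}{c}$, following exactly the case split induced by the rules of Figure~\ref{fig:crules} (and its full version in Appendix~\ref{crules}). The statement to maintain as the induction invariant is: if the conclusion task of a rule instance is well-typed, then every task appearing anywhere in the derivation tree above it is well-typed. Since each rule has the conclusion task as its root and the premises' tasks as immediate children, it suffices to show for each rule that well-typedness of the conclusion task implies well-typedness of each premise task; the induction hypothesis then propagates this up the whole tree, and the $\Hole$ case is the base case where the single task in the derivation is the conclusion itself, trivially well-typed by assumption.

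For each rule I would unfold the definition of well-typedness — namely that every premise $P : t$ in $\Gamma$ or $\Delta$ satisfies $\type{I ~|~ \Sigma}{t}{prop}$ — and check that the premise task still satisfies it. The cases split into a few patterns. For $\Trivial$ nothing changes. For $\Split$ and $\Destruct$ the new premises have formulas that are immediate subterms ($t_1$, $t_2$) of a formula $t_1 \lor t_2$ or $t_1 \land t_2$ already known to be of type $prop$; inverting the typing rule for the binary operator (from Appendix~\ref{trules}) gives $\type{\Sigma}{t_i}{prop}$ directly, and all other premises are unchanged, so the task stays well-typed. For $\Assert$ the side condition $\type{\Sigma}{t}{prop}$ is part of the rule, so the cut formula is well-typed by hypothesis. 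The more delicate cases are the quantifier rules: in $\IntroQuant$ the signature is extended by a fresh $y : \tau$ and the new premise carries $t[x \mapsto y]$; I would use a substitution lemma stating that if $\type{\Sigma, y : \tau}{\lambda x : \tau.\,t}{\tau \leadsto prop}$ (obtained by inverting the typing of $\forall x : \tau.\,t$ or $\exists x:\tau.\,t$, which is $prop$ by hypothesis) then $\type{\Sigma, y : \tau}{t[x \mapsto y]}{prop}$, and that the old premises remain well-typed under the enlarged signature (weakening). Dually, $\InstQuant$ uses the side condition $\type{\Sigma}{u}{\tau}$ together with a term-substitution lemma to get $\type{\Sigma}{t[x \mapsto u]}{prop}$. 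The type-quantifier rules $\IntroType$ and $\InstType$ are analogous but operate on the type level: for $\IntroType$ one extends $I$ with a fresh nullary $\iota$ and uses a type-substitution lemma to type $t[\alpha \mapsto \iota]$ as $prop$, checking also that existing premises — which have no free type variables — remain well-typed after enlarging $I$; for $\InstType$ the side condition that $\tau$ has no type variables ensures the result $t[\alpha \mapsto \tau]$ still has no free type variables and is of type $prop$.

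The main obstacle is therefore not the induction scaffolding, which is routine, but establishing the handful of supporting typing lemmas that the quantifier cases rely on: a weakening lemma for extensions of $\Sigma$ and of $I$, and term- and type-substitution lemmas of the shape "$\type{\Sigma, x : \tau}{t}{prop}$ and $\type{\Sigma}{u}{\tau}$ imply $\type{\Sigma}{t[x \mapsto u]}{prop}$" (and similarly for types). These are standard facts about Hindley-Milner-style typing with prenex polymorphism, each proved by an easy induction on the typing derivation of $t$ using the rules of Appendix~\ref{trules}; the only point requiring care is the freshness bookkeeping — that $y$ (resp.\ $\iota$) being fresh for $\Sigma, \Gamma, \Delta, t$ (resp.\ not in $I$) is exactly what makes the extended context a well-formed signature and prevents capture in the substitution. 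Once these lemmas are in place, every rule is dispatched by a short direct computation, so the proposition follows.
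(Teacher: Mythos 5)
The paper states this proposition without any proof, so there is nothing to diverge from: your structural induction on the derivation of $\bs{T}{c}$, reducing each rule to preservation of well-typedness from conclusion to premises via inversion of the typing rules plus standard weakening and term-/type-substitution lemmas, is exactly the argument the author leaves implicit. The proposal is correct and complete in outline (the remaining interpreted-theory cases such as $\Rewrite$ and $\Induction$ follow the same substitution pattern you already describe).
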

We also assume that transformations are always applied to well-typed initial
tasks and produce well-typed resulting tasks (or they fail), so that every task
we consider from now on is implicitly assumed to be well-typed.

\begin{theorem}[Certificate Correctness] \label{ccorrect} If $\bs{T}{c}$ then
  the validity of each leaf of $c$ implies the validity of $T$.
\end{theorem}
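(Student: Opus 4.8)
The plan is to prove this by structural induction on the derivation of $\bs{T}{c}$, that is, by a case analysis on the last certificate rule applied. For each rule, the induction hypothesis gives us, for each premise $\bs{T_i}{c_i}$ of the rule, that the validity of the leaves of $c_i$ implies the validity of $T_i$. The leaves of $c$ are precisely the concatenation of the leaves of its sub-certificates (from the in-order traversal described in Section~\ref{certificates}), so assuming all leaves of $c$ are valid amounts to assuming all leaves of every $c_i$ are valid; by the induction hypotheses we then get that every $T_i$ is valid. It remains, in each case, to show the purely semantic implication: \emph{if all the premise tasks $T_i$ are valid, then $T$ is valid}. This reduces the theorem to a finite collection (about twenty, counting the dual hypothesis/goal variants) of elementary model-theoretic lemmas about sequent validity, one per rule.

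\textbf{Key steps.} First I would set up the induction and record the observation that the leaf-concatenation property lets the global hypothesis on the leaves of $c$ be distributed to the sub-certificates. Then I would dispatch the base cases: for $\Hole(\Gamma\vdash\Delta)$ the single leaf \emph{is} $T$, so the implication is a tautology; for the two $\Trivial$ rules there are no leaves, so we must show unconditionally that $\Gamma, H:\bot \vdash \Delta$ and $\Gamma \vdash \Delta, G:\top$ are valid, which is immediate since no model satisfies $\bot$ and every model satisfies $\top$. For the inductive cases I would treat one representative of each dual pair and note the other is symmetric. The $\Assert(P,t,c_1,c_2)$ case uses that a model of $\Gamma$ either satisfies $t$ (then use validity of $T_2 = \Gamma,P:t\vdash\Delta$) or does not (then it is a model of $\Gamma$ satisfying $\lnot t$, hence a model of all of $\Gamma$ and of $\Delta,P:t$ that must satisfy some formula of $\Delta$, using validity of $T_1$). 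For $\Split$ and $\Destruct$ I would use the standard boolean reasoning that $t_1\lor t_2$ on the left splits into two cases while $t_1\land t_2$ on the left gives both conjuncts, and dually on the right. For $\IntroQuant$ the freshness side condition ensures that extending $\Sigma$ with $y:\tau$ and reading $t[x\mapsto y]$ as a hypothesis is a conservative reformulation of having $\exists x:\tau.\,t$; concretely, from a model of $\Gamma,\exists x:\tau.\,t$ one picks a witness for $y$, obtains a model of the premise task, and transfers its satisfaction of $\Delta$ back (which is legitimate since $y\notin\Delta$). For $\InstQuant$ one keeps the universally quantified hypothesis and adds the instance $t[x\mapsto u]$, which any model of $\forall x:\tau.\,t$ satisfies — here the typing side condition $\type{\Sigma}{u}{\tau}$ is what makes the instance well-formed. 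The $\IntroType$ and $\InstType$ cases are the type-level analogues: $\IntroType$ adds a fresh $0$-ary type symbol $\iota\notin I$ and specializes $\alpha$ to it, so from a model of $\Pi\alpha.\,t$ one builds a model of the extended signature by interpreting $\iota$ as an arbitrary witnessing domain; $\InstType$ specializes $\alpha$ to a closed type $\tau$, and any model of $\Pi\alpha.\,t$ validates $t[\alpha\mapsto\tau]$.

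\textbf{Main obstacle.} The routine boolean cases ($\Split$, $\Destruct$, $\Assert$) are straightforward. The step I expect to be the real work is justifying the quantifier-introduction cases, and especially $\IntroType$: making precise the claim that moving a fresh type symbol into the signature and specializing the $\Pi$-bound type variable to it is a sound manipulation requires a small argument about extending models along a signature extension and about the interplay between the semantics of the $\Pi$-quantifier (quantification over all interpretations of domains) and the addition of an uninterpreted type symbol. The freshness/non-occurrence side conditions are doing essential work there, and one must check they suffice — in particular that nothing in $\Gamma$, $\Delta$, or the ambient signature mentions the fresh name, so that a model can be restricted and re-expanded without disturbing the truth values of the other premises. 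Once the semantics of tasks and of the $\Pi$-quantifier from Section~\ref{setting} is pinned down, each of these is short, but this is where the care is needed; the rest is bookkeeping over the case split.
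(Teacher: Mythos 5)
Your proposal is correct and follows exactly the paper's proof, which is simply ``By induction on $\bs{T}{c}$'' --- your elaboration of the per-rule semantic lemmas (including the leaf-concatenation observation and the care needed for the quantifier and type-quantifier cases) is a faithful, more detailed rendering of that same induction.
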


\begin{proof} By induction on $\bs{T}{c}$.
\end{proof}

\subsection{Design Choices}

The certificate rules are taken from the sequent calculus LK rules with
modifications for two reasons. First, we want the production of certificates to
be more natural. This is why the name $\Split$ is well-suited for a
transformation application that, from the initial task $\Gamma, H : t_1 \lor t_2
\vdash \Delta$, returns $\Gamma, H : t_1 \vdash \Delta$ and $\Gamma, H : t_2
\vdash \Delta$. Indeed, it would be confusing to say that, from this initial
task, the transformation does the left introduction of the disjunction. Second,
we want to be able to implement a checker of certificates following these
rules. To this end, instead of asking the checkers to find the names that were
chosen by the transformation, we register these names in the certificates. For
example, the $\IntroQuant$ certificate mentions the name $y$ of the new fresh
variable that is being introduced and this is reflected in the corresponding
rules.

\subsection{Certifying Transformations and Composition}
\begin{definition}[Certifying transformation]
  A \emph{certifying transformation} is a transformation that, applied on an
  initial task, produces, on top of a list~$L$ of resulting tasks, a
  certificate~$c$ such that $L$ is the leaves of~$c$. We say that we
  instrumented the transformation to produce a certificate.
\end{definition}

Composing transformations is useful to define a transformation from simpler
ones. To compose certifying transformations, one also needs to be able to
substitute certificates, that is, to replace a $\Hole$ in one certificate
with another certificate. This composition allows for a modular development of
certifying transformations.

\section{Adding Support for Interpreted Theories} \label{interp}
For now, our formalism implicitly makes the assumption that every symbol is
uninterpreted: they are taken as fresh new symbols for every task. Still, we
want some symbols (such as equality or arithmetic operations) to have a fixed
interpretation. Moreover, some transformations, like induction, use specific
theories and we need to add certificate steps to be able to certify them.

To make sure that the interpretation is unique, we should not quantify over the
interpreted symbols at the level of the tasks. Interpreted symbols are not part
of the signature or type signature of tasks. This ensures that the
interpretation stays the same for the initial task and for the resulting tasks
and this is enough to handle transformations on tasks that contain interpreted
symbols. To handle transformations that deal with the interpreted symbols and
use their properties, we extend our certificate format and add rules
corresponding to their properties.

\subsection{Polymorphic Equality}

The polymorphic equality is interpreted. To obtain the usual properties of the
equality, we add the certificates:
\begin{align*}
  &\EqRefl(\mterm, \ident) \\
  &\Rewrite(\bool, \mterm, \mterm, \mterm, \ident, \ident, \cert)
\end{align*}
and three kernel rules:
\begin{mathpar}
  \inferrule{~}
  {\bs{\Sigma ~|~ \Gamma \vdash \Delta, G: x = x}
    {\EqRefl(x, G)}}

  \inferrule{\bs{\Gamma, H :  a = b, P : t[b] \vdash \Delta}{c}}
  {\bs{\Gamma, H : a = b, P : t[a] \vdash \Delta}
    {\Rewrite(false, a, b, t, P, H,c)}}

  \inferrule{\bs{\Gamma, H :  a = b \vdash \Delta, P : t[b]}{c}}
  {\bs{\Gamma, H : a = b \vdash \Delta, P : t[a]}
    {\Rewrite(true, a, b, t, P,H,c)}}
\end{mathpar}
When $t$ is a function of the form $\lambda x. ~u$, we write $t[u']$ for the
substitution $u[x \mapsto u']$.  These rules deal with the reflexivity of
equality and the rewriting under context. They are sufficient to obtain the
standard properties of equality: symmetry, transitivity, and congruence.

\paragraph{Application to the \texttt{rewrite} Transformation.}

The Why3 \textsl{rewrite} transformation is a powerful transformation that can
rewrite terms modulo an equality that is under implications and universal
quantifiers. It looks for a substitution to match the left-hand side of the
given equality to rewrite it as the right-hand side following this
substitution. Moreover, it allows rewriting from right to left instead.
We instrument this transformation in the general case: using the found
substitution, we define certificates to introduce in turns implications and
universal quantifiers in a temporary hypothesis, to then apply symmetry of
equality if needed and to use this equality to rewrite the target premise
and finally remove the temporary hypothesis.

\subsection{Integers}

The type symbol~$int$, integer literals and the operator symbols~$+$, $*$, $-$,
$>$, $<$, $\geq$ and $\leq$ are interpreted.
To be able to certify a transformation that performs an induction on integers,
we add a certificate $\Induction(\ident, \mterm, \mterm, \ident, \ident, \ident,
\cert, \cert)$ to the kernel certificates with one rule for strong induction:

\begin{mathpar}
  \inferrule{i \text{ is fresh w.r.t. } \Gamma, \Delta, t \\
        \type{\Sigma}{i}{int}\\
        \type{\Sigma}{a}{int}\\
    \bs{\Gamma, H_i : i \leq a \vdash \Delta, G : t[i]}{c_{base}}\\
    \bs{\Gamma, H_i : i > a, H_{rec} : \forall n : int, ~ n < i \Rightarrow t[n] \vdash \Delta, G : t[i]}{c_{rec}}}
  {\bs{\Gamma \vdash \Delta, G : t[i]}{\Induction(i, a, t, G, H_i, H_{rec}, c_{base}, c_{rec})}}
\end{mathpar}

\paragraph{Application to the \texttt{induction} Transformation.}

The Why3 \texttt{induction} transformation can be called even if the context
depends on the integer on which the induction is done and, in this case, the
induction hypothesis takes into account this context.
This transformation is instrumented to produce the certificate that first puts
in the goal the premises that depend on the integer on which the induction is
done, applies the $\Induction$ certificate, and then, for the two resulting
tasks, introduces those premises.

\section{Certificate Checker} \label{checkers}

Let us consider an application of a certifying transformation~$\phi$ on an
initial task~$T$ that produces a certificate~$c$ and a list of resulting
tasks~$L$. To verify that this application is correct, $c$, $T$ and $L$ are
provided as input to some checker. If the checker validates this application
then the transformation returns $L$, otherwise it fails. In this section, we
present the elaboration of certificates, a preprocessing step realized before
calling any checker. We then show how to define such checkers and be confident
in their answers.

\subsection{Surface Certificates and Elaboration} \label{elab}

\begin{wrapfigure}{r}{0.56\textwidth}
  \hspace{10mm}\includegraphics[width=0.45\textwidth]{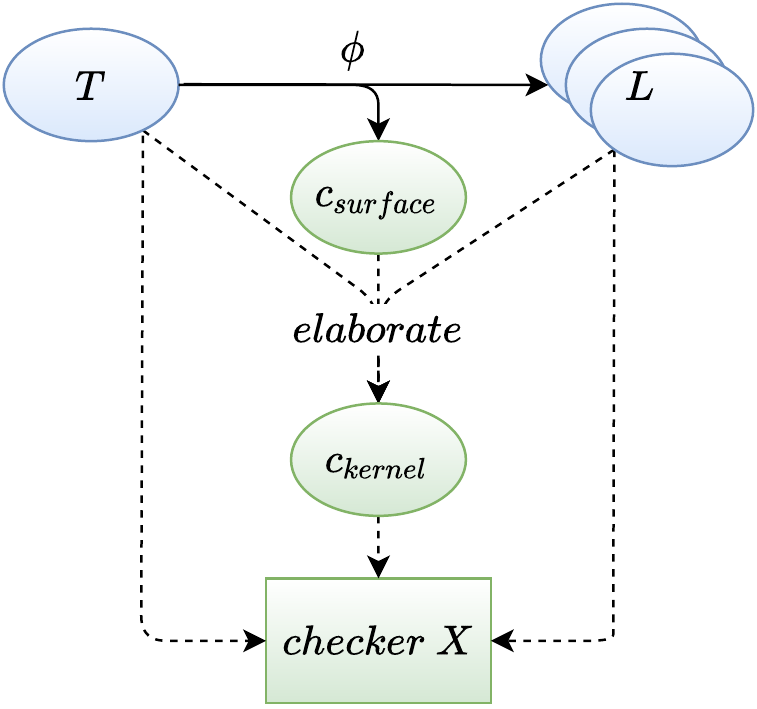}
\end{wrapfigure}
Making a transformation certifying can be difficult, especially if the
transformation has to produce a low level certificate. To facilitate this
process, we define \emph{surface certificates} that are easier to use than
kernel certificates. We instrument the transformations to produce surface
certificates instead of kernel certificates and implement an \emph{elaboration}
procedure to translate them into kernel certificates. In order to obtain the
needed data to produce the kernel certificate, before calling a checker, the
elaboration procedure is called with the initial task and the resulting tasks
as input.
Because the proof of correctness of certificates is done on the kernel
certificates, we can define more complex surface certificates, as long as we are
able to define the elaboration for them. Another advantage of surface
certificates over kernel certificates is that they are less verbose, making them
easier to produce.

\begin{example} \label{ex:split}
  The surface certificate $\txttt{SSplit}(\ident, \cert, \cert)$ is elaborated
  into the kernel certificate $\Split$. Suppose that a certifying transformation
  applied on initial task $T \defeq H : x_1 \lor x_2 \vdash G : x$ returns the
  list $T_1; ~ T_2$ with $T_1 \defeq H : x_1 \vdash G : x$ and $T_2 \defeq H :
  x_2 \vdash G : x$ and the surface certificate
  \[\txttt{SSplit} (H, \Hole(T_1), \Hole (T_2))\]
  The elaboration produces a kernel certificate indicating which formulas it is
  applied to ($x_1$ and $x_2$) and that $H$ is not a goal (Boolean $false$):
  \[\Split (false, x_1, x_2, H, \txttt{EHole}(T_1), \txttt{EHole}(T_2))\]
\end{example}

We can define every surface certificate that we find convenient. For now there
are about 10 more surface certificates than kernel certificates. Among them, there
are $\txttt{SEqSym}$ and $\txttt{SEqTrans}$ for symmetry and
transitivity of equality and $\txttt{SConstruct}$ described in the
following example.

\begin{example}
  We define the surface certificate
  $\txttt{SConstruct}(\ident,\ident,\ident,\cert)$ to validate a transformation
  application that first merges two premises into one. More precisely, by
  writing $c'$ a certificate $c$ that has been elaborated, we should be able to
  derive the following rules:
  \begin{mathpar}
    \inferrule{\bs{\Gamma, P : t_1 \land t_2 \vdash \Delta}{c'}}
              {\bs{\Gamma, P_1 : t_1, P_2 : t_2 \vdash \Delta}
                {\txttt{SConstruct}(P_1, P_2, P, c)'}
              }

    \inferrule{\bs{\Gamma \vdash \Delta, P : t_1 \lor t_2}{c'}}
              {\bs{\Gamma \vdash \Delta, P_1 : t_1, P_2 : t_2}
                {\txttt{SConstruct}(P_1, P_2, P, c)'}
              }
  \end{mathpar}

  The $\txttt{SConstruct}$ certificate does not have a corresponding kernel
  certificate. Instead, it is replaced during elaboration by a combination of
  the $\Assert$ certificate on formula $t_1 \land t_2$ and other propositional
  certificates, notably $\Destruct$. Notice that we need to have access to the
  formula $t_1 \land t_2$, which is precisely the point of the elaboration and
  why we could not define directly $\txttt{SConstruct}$ as a combination of
  surface certificates.
\end{example}

\subsection{OCaml Checker} \label{ochecker}

We implemented two checkers, the first one is written in OCaml and follows a
computational approach: it is based on a function $ccheck$ that is called with
the certificate~$c$ and initial task~$T$ and interprets the certificate as
instructions to derive tasks such that their validity implies the validity of~$T$, verifying in the end that the derived tasks are the leaves of~$c$. The
checker validates the application when this function returns~$true$.

\begin{theorem}
  Let $c$ be the certificate produced by applying a certifying transformation on
  task~$T$. If $ccheck ~c ~T = true$ then $\bs{T}{c}$.
\end{theorem}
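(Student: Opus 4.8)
\emph{Proof plan.} The statement fixes $c$ and $T$ only so that one may speak of ``the'' certificate and task; the soundness of the checker holds in general, so the plan is to prove, by structural induction on the certificate, the strengthened claim: \emph{for every task $T'$, if $ccheck~c~T' = true$ then $\bs{T'}{c}$}. This induction is well founded because $ccheck$ is defined by structural recursion on its certificate argument, and its case split coincides with the constructors of $\cert$ (the roughly twenty kernel certificates, including those for equality and integers). For $c = \Hole(T'')$, the function $ccheck$ returns $true$ only after checking that the task it has threaded through the run so far, here $T'$, equals the task $T''$ stored in the node; hence $T' = T''$ and the axiom rule for $\Hole$ yields $\bs{T'}{c}$. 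For $\Trivial(b, P)$, success of $ccheck$ means it has found in $T'$ a hypothesis $P : \bot$ (when $b = false$) or a goal $P : \top$ (when $b = true$), which is exactly the side condition of the corresponding rule; as there is no subcertificate, we conclude directly.

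For every internal constructor the argument is uniform. Unfolding $ccheck$ on that constructor exhibits three kinds of actions: (a) shape checks on the designated premise of $T'$ --- e.g.\ that $P$ names a hypothesis of the form $t_1 \lor t_2$ for $\Split(false,\dots)$, an existential $\exists x{:}\tau.~t$ for $\IntroQuant(false,\dots)$, a type quantifier $\Pi\alpha.~t$ for $\InstType$, or an equality $a = b$ with a premise $P : t[a]$ present for $\Rewrite$; (b) typing and freshness tests --- $\type{\Sigma}{u}{\tau}$ for $\InstQuant$, $\iota \notin I$ for $\IntroType$, $y$ not occurring in $\Sigma, \Gamma, \Delta, t$ for $\IntroQuant$, $\tau$ having no type variables for $\InstType$; and (c) recursive calls $ccheck~c_i~T_i$ on the subcertificates, where each $T_i$ is the task built from $T'$ by the operation attached to that constructor (adding the introduced variable to the signature, fresh-naming a type symbol, adding or removing a premise, performing a substitution, and so on). By the induction hypothesis each successful recursive call gives $\bs{T_i}{c_i}$; these judgements are precisely the premises of the rule associated with this constructor, the checks of (a) and (b) supply its remaining hypotheses, and so one application of that rule delivers $\bs{T'}{c}$. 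The cases for $\Rewrite$ and $\Induction$ need nothing extra, since the certificate already records the context $t$ and all witnesses, so $ccheck$ only has to verify that the relevant premises are literally of the expected shape (and, for $\Induction$, to build $G : t[i]$ together with the base and recursive hypotheses).

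The main obstacle is making the correspondence in (a)--(c) rigorous and pinning down exactly how much the induction invariant must be strengthened. Two points deserve care. First, $ccheck$ does not receive the leaves of $c$ as a separate input: it threads the current task through the recursion and, at each $\Hole$, compares it with the task stored there (equivalently, it reconstructs the derived-task list and matches it, entry by entry, against the leaves of $c$). One therefore carries in the invariant the fact that the task $ccheck$ holds when it reaches a given $\Hole$ is the one a partial derivation of $\bs{T'}{c}$ would expose at that position, the final comparison performed by $ccheck$ being exactly what validates this. Second, the freshness conditions of the rules are abstract (``$y$ fresh w.r.t.\ $\Sigma, \Gamma, \Delta, t$'', ``$\iota \notin I$''), whereas $ccheck$ tests them by a concrete computation on the representation of the task; one must check, constructor by constructor, that a positive concrete test entails the abstract condition. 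Both points are routine once stated precisely, and the theorem follows by instantiating $T'$ with the initial task $T$.
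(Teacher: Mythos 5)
Your proof is correct and follows the same route the paper intends: the paper only remarks that ``the OCaml-checker definition follows closely the semantics of the certificates'' so the proof is straightforward, and your structural induction on the certificate, matching each case of $ccheck$ to the premises and side conditions of the corresponding rule in Figure~\ref{fig:crules}, is precisely that argument spelled out. No gap.
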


The OCaml-checker definition follows closely the semantics of the certificates.
For this reason, the proof of the previous theorem is relatively straightforward
to do on paper. Together with Theorem~\ref{ccorrect}, this guarantees that each
application of a transformation that is checked by the OCaml checker is correct.

\subsection{Lambdapi Checker} \label{lchecker}

The second checker uses Lambdapi/Dedukti~\cite{lambdapi16}, a proof assistant
based on a type checker for the $\lambda\Pi$-Calculus modulo rewriting which
extends the $\lambda\Pi$-Calculus formalism with custom rewriting rules. This
checker has two main advantages over the OCaml checker: (1)~this checker uses an
off-the-shelf proof assistant, benefiting from its reliability and its features,
such as the ability to define custom rewriting rules; (2)~this checker is proven
to be correct, and this proof is machine-checked.

\begin{wrapfigure}{r}{0.50\textwidth}
\vspace{-3mm}
\includegraphics[width=0.45\textwidth]{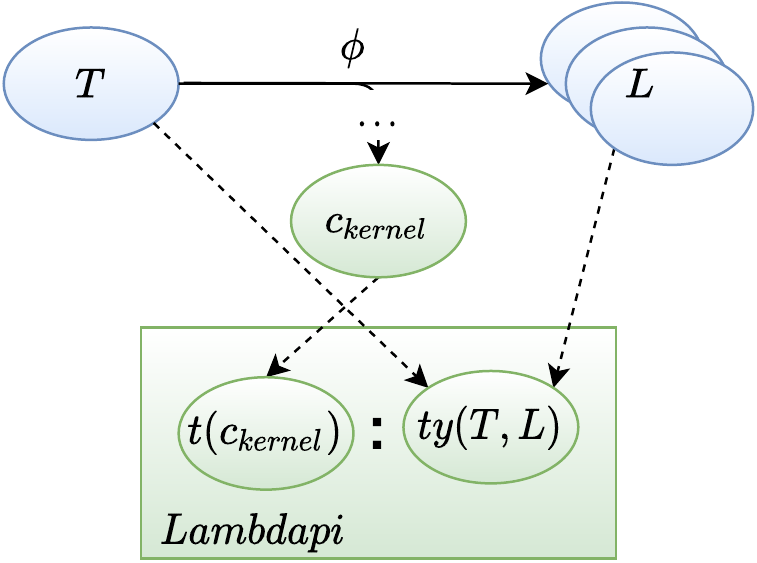}
\vspace{-4mm}
\end{wrapfigure}
Every time a transformation is called, a Lambdapi proof is generated, and this
proof is then checked by the type checker. More precisely, we define a shallow
embedding of proof tasks in Lambdapi: a proof task $T$ is encoded as a Lambdapi
formula $\widehat{T}$. In the diagram, a certifying transformation is applied to
the initial task $T$ and produces the resulting tasks $L \defeq T_1; \dots; T_n$
and a certificate $c$, elaborated as $c_{kernel}$. Our tool then generates the
type $ty(T, L)$ which is the Lambdapi formula stating that $\widehat{T_1},
\widehat{T_1}, \dots,$ and $\widehat{T_n}$ implies $\widehat{T}$, type that we
call the \emph{application correctness type}.
Finally, we check that the application is correct by generating a proof
term~$t(c_{kernel})$ and asking Lambdapi to check that $t(c_{kernel})$ has
type~$ty(T, L)$.

This approach assumes that we trust Lambdapi, its type checker and the embedding
of proof tasks (paragraphs~\ref{embedding} and \ref{embedding:task}). However the
proof term generation is not contained in the trust base: the way the term is
obtained does not matter as long as it has the requested type. Additionally, we have
defined terms in Lambdapi for each certificate (paragraph~\ref{proofterm}),
including certificates from interpreted theories
(paragraph~\ref{proofterm:interp}). These terms have been checked by Lambdapi to
have the expected type so this gives us a machine-checked proof of
Theorem~\ref{ccorrect}.

\begin{theorem}
Consider a transformation application that from task~$T$ returns the list of
tasks~$L$. If the application correctness type~$ty(T, L)$ is
inhabited, then this application is correct.
\end{theorem}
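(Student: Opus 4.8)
The plan is to argue semantically, bridging the syntactic fact that $ty(T, L)$ is inhabited with the model-theoretic notion of validity through the shallow embedding $\widehat{\cdot}$. Write $L \defeq T_1; \dots; T_n$, so that by construction $ty(T, L)$ is the Lambdapi type $\widehat{T_1} \to \dots \to \widehat{T_n} \to \widehat{T}$. Note that, unlike in the situation of Theorem~\ref{ccorrect}, we are \emph{not} handed a certificate $c$ with $\bs{T}{c}$, only an inhabitant of this type obtained by whatever means; the argument therefore cannot be routed through the certificate semantics and must instead go through the meaning of the embedding. I would use two ingredients: (1)~a soundness result for Lambdapi's type theory, namely that the $\lambda\Pi$-calculus modulo the rewriting rules of our encoding is consistent and admits a sound set-theoretic interpretation of its types and terms, so that inhabitation of a closed type (designed to denote a proposition of the metatheory) entails that its denotation holds; and (2)~an \emph{adequacy lemma} for the task embedding: for every well-typed task $T'$, the task $T'$ is valid if and only if the denotation of $\widehat{T'}$ holds. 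Ingredient~(1) sits in the trusted base --- we already assume we trust Lambdapi, its type checker and the embedding --- whereas ingredient~(2) is the core of the proof, and is what I would establish in detail.

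First I would prove the adequacy lemma by unfolding the definition of $\widehat{\cdot}$ from the description of the embedding (paragraphs~\ref{embedding} and~\ref{embedding:task}). The encoding of a task $I ~|~ \Sigma ~|~ \Gamma \vdash \Delta$ quantifies, at the Lambdapi level, over a type code for each type symbol declared in $I$ and over a term for each symbol declared in $\Sigma$, then takes the encodings of the premises of $\Gamma$ as successive hypotheses and concludes with the encoded disjunction of the premises of $\Delta$; a prenex type quantifier $\Pi\alpha$ occurring inside a premise turns into a further quantification over a type code. Under the interpretation provided by ingredient~(1), a choice of all these type codes and terms is exactly a structure $\mathcal{M}$ interpreting $I$ and $\Sigma$; the hypotheses assert that $\mathcal{M}$ is a model of every formula in $\Gamma$; and --- using that the encodings of $prop$, of the connectives and of the first-order quantifiers have the expected truth conditions in the model --- the conclusion asserts that $\mathcal{M}$ is a model of some formula in $\Delta$. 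So the denotation of $\widehat{T'}$ is precisely ``every model of $I$, $\Sigma$ and every formula in $\Gamma$ is a model of some formula in $\Delta$'', which is the definition of the validity of $T'$. The interpreted symbols are treated uniformly: equality, $int$ and the arithmetic operators are not quantified over in $\widehat{\cdot}$ but have their meaning fixed by the encoding together with its rewriting rules, so the models implicitly ranged over are exactly those that interpret these symbols in the standard way, which is also the class of models used to define validity.

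Given the adequacy lemma, the theorem follows by a short chain of implications. Suppose $ty(T, L)$ is inhabited. By ingredient~(1), the denotation of $\widehat{T_1} \to \dots \to \widehat{T_n} \to \widehat{T}$ holds; that is, the denotations of $\widehat{T_1}, \dots, \widehat{T_n}$ together imply the denotation of $\widehat{T}$. Now assume that each resulting task $T_i$ is valid. By the adequacy lemma applied to $T_i$, the denotation of $\widehat{T_i}$ holds, for every $i$; hence the denotation of $\widehat{T}$ holds; hence, by the adequacy lemma applied to $T$, the task $T$ is valid. Since this is exactly the statement that the validity of every resulting task implies that of the initial task, the transformation application is correct. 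The degenerate case $n = 0$ is included: then $ty(T, L) = \widehat{T}$ and the argument yields the unconditional validity of $T$.

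I expect the main obstacle to be the adequacy lemma, and in particular two delicate points inside it. The first is making precise the correspondence between the Lambdapi type codes quantified over in $\widehat{T'}$ and the carriers of the models quantified over in the definition of validity, so that the two ``for all'' genuinely coincide; the prenex-polymorphism case, where $\Pi\alpha$ must denote a real quantification over all admissible interpretations of $\alpha$, is the representative difficulty and forces care about which universe the type codes live in. The second is checking that the encoding of the interpreted theories pins down exactly the intended interpretations (equality as equality, $int$ as the integers, the operators as their usual functions), so that no spurious model is admitted or excluded on either side of the biconditional. A further assumption, deliberately not discharged, is the model-soundness of the $\lambda\Pi$-calculus modulo the particular rewriting rules we use, on which ingredient~(1) rests; this is part of what ``trusting the embedding'' means.
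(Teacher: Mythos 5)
Your plan is correct and matches the paper's intent: the paper gives no explicit proof of this theorem, treating it as an immediate consequence of trusting Lambdapi's type checker and the shallow embedding of tasks, and your two ingredients (soundness of the $\lambda\Pi$-calculus modulo the encoding's rewrite rules, plus an adequacy lemma identifying validity of $T'$ with the denotation of $\widehat{T'}$) are exactly what that trust unpacks to. The delicate points you flag --- the correspondence between $Type$-codes and model carriers under prenex polymorphism, the classical (negated-goals) encoding of $\Delta$, and pinning down the interpreted theories --- are precisely the parts the paper leaves in the trusted base rather than proving.
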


\subsubsection{Shallow Embedding} \label{embedding}

In Lambdapi, we define the translation of a task validity by quantifying over
type symbols and function symbols, thus making these declarations explicit. We
are able to quantify in this way, both at the level of types and at the level of
terms, by using an encoding of the Calculus of
Constructions~\cite{DBLP:conf/fsttcs/Huet87} (written CoC) inside Lambdapi. We
obtain a formal description of the whole task which allows us to state and prove
the correctness of a transformation application.
For the system to stay coherent we should be careful when adding rewriting rules
and axioms (symbols in Lambdapi). We make use of an existing CoC encoding inside
Lambdapi~\cite{PTS} to which we add the axiom of excluded middle. This encoding
is also a shallow embedding inside Lambdapi, so we also get a shallow embedding
of our language inside Lambdapi. In this way, we do not need to explicitly
mention the context of proof and to handle it through inversion and weakening
lemmas, which would make the method impracticable.

We are able to translate our whole formalism using this embedding. The
translation of a term, a type or a task $t$ inside Lambdapi is
denoted~$\widehat{t}$.  We use exclusively the CoC syntax to describe this
translation: we write $\forall x : A, ~B$ for the dependent product, $A
\rightarrow B$ when $B$ does not depend on $x$, $\lambda x : A, ~B$ for the
abstraction and omit $A$ when it can easily be inferred. The sorts are $Type$
and $Kind$, with $Type$ being of type $Kind$. To translate the terms, we use an
impredicative encoding~\cite{PfenningP89}. Here is an excerpt of this encoding:

\[
\begin{array}{rllrl}
  \widehat{prop} ~\defeq& Type
  &\quad& \widehat{\bot} ~\defeq& \forall C : Type, ~C\\
  \widehat{t_1 \land t_2} ~\defeq& \forall C : Type, ~(\widehat{t_1} \rightarrow \widehat{t_2} \rightarrow C) \rightarrow C &&
  \widehat{\top} ~\defeq& \widehat{\bot} \rightarrow \widehat{\bot}
  \\
  \widehat{t_1 \lor t_2} ~\defeq& \forall C : Type, ~(\widehat{t_1} \rightarrow C) \rightarrow (\widehat{t_2} \rightarrow C) \rightarrow C &&
  \widehat{\lnot~ t} \defeq& \widehat{t} \rightarrow \widehat{\bot}
\end{array}
\]

We note $\widehat{\lnot} ~u$ for $u \rightarrow \widehat{\bot}$ such
that $\widehat{\lnot ~ t} = \widehat{\lnot} ~ \widehat{t}$ and we extend this
notation to the conjunction and the disjunction. Note that $\widehat{\top}$ is
inhabited by $\lambda c, ~c$.

\subsubsection{Translating Tasks}\label{embedding:task}

Let us give the translation of a task, where $Type^n$ denotes the $n$-ary
function over $Type$ (for example, $Type^2$ is
${Type \rightarrow Type \rightarrow Type}$).
\[
\begin{array}{lll}
  \text{Task } I ~|~ \Sigma ~|~ \Gamma \vdash \Delta \text{ with :}&& \text{Corresponding Lambdapi term:}\\
  I  \defeq \iota_1 : i_1, \dots, \iota_m : i_m & \quad \quad \quad& \forall \iota_1 : Type^{i_1}, ~\dots, \forall \iota_m : Type^{i_m}, \\
  \Sigma  \defeq f_1 : \tau_1, \dots, f_n : \tau_n && \quad \forall f_1 : \widehat{\tau_1},~ \dots, \forall f_n : \widehat{\tau_n}, \\
  \Gamma  \defeq H_1 : t_1, \dots, H_k : t_k && \qquad \widehat{t_1} \rightarrow \dots \rightarrow \widehat{t_k} \rightarrow \\
  \Delta  \defeq G_1 : u_1, \dots, G_l : u_l && \qquad\quad \widehat{\lnot~ u_1} \rightarrow \dots \rightarrow  \widehat{\lnot~ u_l} \rightarrow \widehat{\bot}
\end{array}
\]
Note that for polymorphic symbols, we need to declare them with extra type
parameters and to apply them to the appropriate type in the translation.

\subsubsection{Proof Term}\label{proofterm}

For each kernel rule, we associate a Lambdapi type and define a term that has this
type. When building a proof term, we first introduce the identifiers of the
resulting tasks and the identifiers of the type symbols, function symbols and
name of the premises of the initial task. Then, we translate the whole
certificate using the terms having the associated types. The $\Hole$ certificate
is special and does not have a fixed associated type. Instead, it is translated
as the identifier of the task it contains applied to its symbols and premises
following the same order that they have been introduced in. Assuming that our
encoding is correct, the fact that we define a term for every rule of the
certificate semantics gives us a machine-checked proof of the certificate
correctness, Theorem~\ref{ccorrect}.

To produce the proof term, we benefit from the elaboration of certificate in two
ways. First, the fact that the kernel certificates are elementary also
facilitates the definition of the terms corresponding to a kernel rule. Second,
each kernel certificate comes with additional data that we are able to use to
define such terms.

\begin{example} \label{ex:split2}
  For the $\Split$ rule presented in Figure~\ref{fig:crules}, we define a
  Lambdapi term $split$ that has the associated type
  $\forall t_1 : Type, ~\forall t_2 : Type, ~(t_1 \rightarrow \widehat{\bot})
  \rightarrow (t_2 \rightarrow \widehat{\bot}) \rightarrow t_1~\widehat{\lor}~ t_2
  \rightarrow \widehat{\bot}$.  We check the application of
  Example~\ref{ex:split} by verifying that the type
\begin{align*}
(&\forall x_1, ~\forall x, ~x_1 \rightarrow \widehat{\lnot} ~x \rightarrow \widehat{\bot}) \rightarrow\\
(&\forall x_2, ~\forall x, ~x_2 \rightarrow \widehat{\lnot} ~x \rightarrow \widehat{\bot}) \rightarrow\\
&\forall x_1, ~\forall x_2, ~\forall x, ~x_1 ~\widehat{\lor}~ x_2 \rightarrow \widehat{\lnot} ~x \rightarrow \widehat{\bot}
\end{align*}
is inhabited by the term
\begin{align*}
  &\lambda s_1, ~\lambda s_2, ~\lambda x_1, ~\lambda x_2, ~\lambda x, ~\lambda H, ~\lambda G,\\
  &split ~x_1 ~x_2~ (\lambda H, ~s_1 ~x_1 ~x ~H ~G)~ (\lambda H, ~s_2 ~x_2 ~x ~H ~G)
\end{align*}
Notice that $split$ takes the formulas it is applied to as arguments ($x_1$ and
$x_2$) and that those formulas have been found by elaborating the certificate.
\end{example}

\subsubsection{Encoding of Interpreted Theories} \label{proofterm:interp}

In Lambdapi, interpreted symbols are first declared in the preamble. When
interpreted symbols have corresponding certificate rules, we need to use the
properties of those symbols to prove that the types associated to these rules
are inhabited. Instead of declaring such symbols, we define them, which allows
us to prove the needed properties. Since our Lambdapi development is included in
the trusted code base of the Lambdapi checker, we make sure to only add axioms
and rewrite rules when necessary.

\paragraph{Polymorphic Equality.}

We define the equality in Lambdapi using the Leibniz definition of equality:
two terms $t_1$ and $t_2$ of type $\tau$ are equal when
$\forall Q : \tau \rightarrow Type, ~Q~ t_1 \rightarrow Q~ t_2$. Note that the
context of rewriting in the $\Rewrite$ rules is explicitly given as a
function. We use this function, translated as a Lambdapi function, to apply it
to the Leibniz equality when writing a proof term for a $\Rewrite$ certificate.

\paragraph{Integers.}

We define the integer type and usual integer operators in Lambdapi. First, we
define binary positive integers and binary negative integers. Integers are then
either $0$, positive or negative. We use rewrite rules to define those data
types which should be understood as algebraic data
types~\cite{CAC}. From these definitions, we get
a simple induction principle that we use to define a Lambdapi term corresponding
to the stronger induction principle described by the rule of the certificate
$\Induction$.

\section{Experimental Evaluation} \label{experiments}

The goal of this article is to provide a practical framework to render logical
transformations certifying. We apply the framework to Why3 and show that the
approach does not have an inherent problem of efficiency. More importantly, we
show that it is expressive enough to allow us to render complex transformations
certifying.

\subsection{Tests and Benchmarks} \label{tests}

We defined simple certifying transformations (about 15 of them) to
test every certificate.
We also defined a more complex transformation called \texttt{blast} meant to
discharge tautological propositional tasks. This transformation decomposes every
logical connector appearing at the head of formulas before calling itself
recursively. Rendering this transformation certifying required using the
composition of certifying transformations. We evaluated the efficiency of our
checkers by applying this transformation on problems of increasing size in
Figure~\ref{fig:benchs}. The problem with~$n$ propositional variables is to
verify that:
\[p_1 \Rightarrow (p_1 \Rightarrow p_2) \Rightarrow \dots (p_{n-1} \Rightarrow p_n) \Rightarrow p_n\]
We notice that the size of the kernel certificates is not linear with respect to
the number of variables. This is due to the fact that, contrary to the surface
certificates, the kernel certificates contain the formulas they are applied to.
Looking at the OCaml checker, our approach does not seem to have an inherent
problem of efficiency as the overhead it adds to the transformation is
negligible. On the other hand, the Lambdapi checker seems to be much slower.
Performances of Lambdapi have already been improved for our
purposes~\cite{p_shadow,p_context,p_nested,p_var_hyp,p_linear}, and we believe
they could be further improved in future versions. We could also modify our
checker to help Lambdapi, for example by abstracting away big formulas. We leave
this for future work.

\begin{figure}[tp]
  \centering
\begin{tabular}{||p{0.15\textwidth}||c|c|c|c|c|c|c|c|c|c||}
  \hline
  {\small number of variables} & $5$ & $10$ & $15$ & $20$ & $25$ & $50$ & $100$ & $200$ & $400$ & $800$\\
  \hline
  {\small transformation time (sec)} & $\sim 0$ & $\sim 0$ & $0.008$ & $0.016$ & $0.020$ & $0.080$ & $0.29$ & $1.21$ & $5.5$ & $25$ \\
  \hline
  {\small kernel certificate size (kB)} & $2.1$ & $5.8$ & $12$ & $19$ & $28$ & $85$ & $270$ & $950$ & $3500$ & $13000$\\
  \hline
  {\small OCaml checker time (sec)} & $\sim 0$ & $\sim 0$ & $\sim 0$ & $\sim 0$ & $\sim 0$ & $\sim 0$ & $\sim 0$ & $0.020$ & $0.084$ & $0.35$\\
  \hline
  {\small Lambdapi chec-ker time (sec)} & $0.072$ & $0.25$ & $0.80$ & $2.0$ & $4.2$ & $48$ & $660$ & - & - & -\\
  \hline
\end{tabular}
  \caption{Tests on Propositional Tasks}
  \label{fig:benchs}
  \hrulefill
\end{figure}

\subsection{Applications}

We evaluated our method by applying it at different levels. When rendering the
existing transformations \texttt{rewrite} and \texttt{induction} certifying, we
found that it is well-suited to add interpreted theories. When transformations
do not specifically deal with the theory in question, we do not need to extend
our certificate format, while, in general, the duo surface/kernel certificates
allows us to only define a minimal set of kernel rules, even if it means deriving
more surface certificates. By defining the Lambdapi checker, we gave a
machine-checked proof of the rules of our certificates which gives us confidence
in our certificates and their semantics.

\subsection{Better Understanding of Transformations} \label{destcase}

This work has led to a better understanding of transformations. On one hand,
instrumenting a transformation to produce an appropriate certificate requires to
understand why each application of this transformation is correct. Additionally,
once this is done, reading the certificate gives us another way to understand
why a particular transformation application is correct. On the other hand, this
work had led to the definition of the semantics of tasks inside Lambdapi and the
definition of the correction of a transformation in this setting.

In particular, type quantification is explicit in Section~\ref{terms}. For
example, the formula $\Pi \alpha.~ (\forall x : \alpha.~ \forall y : \alpha. ~x
= y) \lor \neg (\forall x : \alpha.~ \forall y : \alpha. ~x = y)$ means that
every type $\alpha$ either has at most one element or it has more than one. This
formula is provable but we cannot apply the certificate $\Split$ on such an
hypothesis. By contrast, in Why3, the type quantification is implicit, and it is
possible apply the \texttt{destruct} transformation on the hypothesis. This
gives us two resulting tasks: one with an hypothesis which states that every
type has at most one element, and the other with an hypothesis which states that
every type has more than one element, both being contradictory. This
bug~\cite{bugdestcase} has been found in the transformation \texttt{destruct}
when encoding proof tasks in Lambdapi; a similar bug was also found in the
transformation~\texttt{case}.

\section{Related Work}

To aid deductive program verification, a number of tools have been developed,
based on proof assistants or independently from them.
In the first case, the programming language on which the verification is done is
built from dedicated libraries and definition of both the programming
language and its logic inside the proof assistant. This is the case for example
for the library Iris~\cite{iris17} built on top of Coq and that allows reasoning
about concurrent, imperative programs or the library
\mbox{AutoCorres}~\cite{greenaway14pldi,greenaway15phd} built on top of Isabelle
and allowing to verify C programs. In such context, the correctness of the
approach is based on the formal semantics of programs and on deduction rules
established once and for all, which requires a large proof effort, thus limiting
the flexibility of the language.
In the second case, the tools developed are verifying annotated programs, and
generate proof obligations that are discharged by automatic theorem provers such
as SMT solvers. Examples of such tools are Why3, Dafny, Viper, Frama-C and SPARK.
Even though they rely on strong fundamental bases, their particular implementations
of such tools and some practical aspects such as their use of automatic theorem
provers have not been machine-checked and can contain bugs. An exception is
given by F*~\cite{DBLP:conf/popl/SwamyHKRDFBFSKZ16}, whose encoding's correctness to
SMT logic has been partially proved in Coq~\cite{Aguirre16}.

Our work lies in between these two approaches. On one hand logical
transformations are similar to tactics used in proof assistants such as
Coq~\cite{delahaye2000tactic}, except that our transformations are considered
part of the trusted code base. On the other hand, logical transformations can be
used automatically or interactively to help discharging proof obligations.  We
followed a skeptical approach extended with a preprocessing step (namely the
elaboration of certificates) similarly to~\cite{chihani2013checking},
except that our framework allows to check higher order proofs, and that the
focus is put on the ease of production of certificates. Indeed, we aim at making
it as easy as possible to render transformations certifying. We designed two
checkers: one based on the reflexive approach, known to be very
efficient~\cite{DBLP:conf/flops/GregoireTW06,armandkeller11} and the other one
based on a shallow embedding into the Lambdapi proof assistant. When using a
shallow embedding, the correctness of the verification relies on the considered
proof tool's correctness which makes its proof much
easier~\mbox{\cite{contejean08types,DBLP:journals/corr/CauderlierH15}}.

\section{Conclusion} \label{conclusion}

We presented a framework to validate logical transformations based on a
skeptical approach. When defining certificates, we put an emphasis on
\emph{modularity} by having certificates with holes and, with the notions of
surface and kernel certificates, \emph{ease of use} without compromising the
checker's verification. We combined all of these notions and applied them to
Why3 by implementing the certificate generation for various transformations and
the certificate verification with two checkers. The first checker was written in
OCaml and uses a computational approach which makes it very efficient while the
second checker is based on Lambdapi and gives us formal guarantees to its
correctness. We extended our work by adding the interpreted theories of the
integers and of the polymorphic equality. This allowed us to instrument more
complex and existing transformations to produce certificates, such as
\texttt{induction} and \texttt{rewrite}. Finally, we validated our method during
development and through tests and benchmarks.

\paragraph{Future Work.}

The current application of our method to Why3 could be improved at different
levels. The first idea is to instrument more transformations to produce
certificates, with polymorphism elimination~\cite{bobotPaskevich2011poly} and
algebraic data type elimination being important challenges. As the number of
certifying transformations increases, we also want to improve the efficiency of the
verification. To do so, we consider two factors: first, we want to compress
certificates on the fly when combining them; second, we want to improve the
efficiency of the Lambdapi checker by allowing to reuse the context of proof
that does not change. Additionally, we consider adding support for more
interpreted theories, while keeping the number of axioms and rewrite rules added
to Lambdapi to a minimum.

A long term goal is to increase trust in other parts of Why3. For example, we
could improve trust when calling automatic theorem
provers~\cite{armandkeller11,DBLP:conf/itp/BohmeW10} or improve trust in the
proof task generation which would require to formalize the semantics of the Why3
programming language~\cite{filliatre99preuve}.

Finally, our method is not specific to Why3 and can be applied, in general, to
certified logical encodings. In particular, existing (certifying) transformations
could be used for encoding a proof assistant's logic into an automatic theorem
prover's logic in order to benefit from both systems.

\paragraph{Acknowledgments.}
We are grateful to Alexandrina Korneva for the English proofreading and to
Claude Marché, Chantal Keller and Andrei Paskevich for their constant support
and their helpful suggestions.

\clearpage

\nocite{garchery:hal-02384946}

\bibliographystyle{eptcs}
\bibliography{biblio,abbrevs,demons,demons2,demons3,team,crossrefs}

\clearpage
\appendix

\section{Typing} \label{trules}

The predicate $\type{\Sigma}{t}{\tau}$ holds when $t$ has no free type variables
and is of type $\tau$ in signature $\Sigma$ and is formally defined by the
following rules:

\begin{mathpar}
  \inferrule{I, \iota : 0 ~|~ \type{\Sigma}{t[\alpha \mapsto \iota]}{prop} \\
  \iota \not\in I}
  {I ~|~ \type{\Sigma}{(\Pi \alpha. ~t)}{prop}}

  \inferrule{\tau \text{ is a subtype of } \Sigma(x) \\
  \tau \text{ has no type variables}}
  {\type{\Sigma}{x}{\tau}}

  \inferrule{~}
  {\type{\Sigma}{\top}{prop}}

  \inferrule{~}
  {\type{\Sigma}{\bot}{prop}}

  \inferrule{\type{\Sigma}{t}{prop}}
  {\type{\Sigma}{\neg t}{prop}}

  \inferrule{\type{\Sigma}{t_1}{prop}\\
  \type{\Sigma}{t_2}{prop}}
  {\type{\Sigma}{t_1 ~ op ~ t_2}{prop}}

  \inferrule{\type{\Sigma}{t_1}{\tau' \leadsto \tau}\\
  \type{\Sigma}{t_2}{\tau'}}
  {\type{\Sigma}{t_1 ~ t_2}{\tau}}

  \inferrule{\type{\Sigma, x : \tau}{t}{prop}\\
  \tau \text{ has no type variables}\\
  x \not\in \Sigma}
  {\type{\Sigma}{ (\forall x : \tau. ~ t)}{prop}}

  \inferrule{\type{\Sigma, x : \tau}{t}{prop}\\
  \tau \text{ has no type variables}\\
  x \not\in \Sigma}
  {\type{\Sigma}{ (\exists x : \tau. ~ t)}{prop}}

  \inferrule{\type{\Sigma, x : \tau'}{t}{\tau}\\
  \tau' \text{ has no type variables}\\
  x \not\in \Sigma}
  {\type{\Sigma}{ (\lambda x : \tau'. ~ t)}{\tau' \leadsto \tau}}
\end{mathpar}

\section{Certificate rules} \label{crules}

For each kernel certificate appearing in this article, we give its corresponding
rules. Theses rules are taken from the set of rules definining the predicate
$\bs{T}{c}$.

\begin{mathpar}
  \inferrule{~}
  {\bs{\Gamma\vdash \Delta}
    {\Hole(\Gamma \vdash \Delta)}}

  \inferrule{~}
  {\bs{\Gamma, H:\bot \vdash \Delta}
    {\Trivial(false, H)}}

  \inferrule{~}
  {\bs{\Gamma \vdash \Delta, G:\top}
    {\Trivial(true, G)}}

  \inferrule{\bs{\Sigma ~|~ \Gamma \vdash \Delta, P : t}{c_1} \\
    \bs{\Sigma ~|~ \Gamma, P : t \vdash \Delta}{c_2}\\
    \type{\Sigma}{t}{prop}}
  {\bs{\Sigma ~|~ \Gamma \vdash \Delta}
    {\Assert(P,t,c_1,c_2)}}

  \inferrule{\bs{\Gamma, H : t_1 \vdash \Delta}{c_1}\\
    \bs{\Gamma, H : t_2 \vdash \Delta}{c_2}}
  {\bs{\Gamma, H : t_1 \lor t_2 \vdash \Delta}
                  {\Split(false, t_1, t_2, H,c_1,c_2)}}

  \inferrule{\bs{\Gamma \vdash \Delta, G : t_1}{c_1}\\
    \bs{\Gamma \vdash \Delta, G : t_2}{c_2}}
  {\bs{\Gamma \vdash \Delta, G : t_1 \land t_2}
                  {\Split(true, t_1, t_2, G,c_1,c_2)}
                  }

  \inferrule{\bs{\Gamma, H_1 : t_1, H_2 : t_2 \vdash \Delta}{c}}
  {\bs{\Gamma, H : t_1 \land t_2 \vdash \Delta}
                 {\Destruct(false, t_1, t_2, H,H_1,H_2,c)}
                 }

  \inferrule{\bs{\Gamma \vdash \Delta, G_1 : t_1, G_2 : t_2}{c}}
  {\bs{\Gamma \vdash \Delta, G : t_1 \lor t_2}
                 {\Destruct(true, t_1, t_2, G,G_1,G_2,c)}
                 }
\end{mathpar}
\begin{mathpar}
  \inferrule{\bs{\Sigma, y : \tau ~|~ \Gamma, H : t [x \mapsto y] \vdash \Delta}{c}\\
    y \text{ is fresh w.r.t. } \Sigma, \Gamma, \Delta, t}
  {\bs{\Sigma ~|~ \Gamma, H : \exists x : \tau . ~t \vdash \Delta}
                  {\IntroQuant(false, \tau, \lambda x : \tau. ~t,H,y,c)}
                  }

  \inferrule{\bs{\Sigma, y : \tau ~|~ \Gamma\vdash \Delta, G : t [x \mapsto y]}{c}\\
    y \text{ is fresh w.r.t. } \Sigma, \Gamma, \Delta, t}
  {\bs{\Sigma ~|~ \Gamma\vdash \Delta, G : \forall x : \tau. ~t}
                  {\IntroQuant(true, \tau, \lambda x : \tau. ~t,G,y,c)}
                  }

  \inferrule{\bs{\Sigma ~|~ \Gamma, H_1 : \forall x : \tau. ~t, H_2 : t[x \mapsto u] \vdash \Delta}{c}\\
    \type{\Sigma}{u}{\tau}}
  {\bs{\Sigma ~|~ \Gamma, H_1 : \forall x : \tau. ~t \vdash \Delta}
                 {\InstQuant (false, \tau, \lambda x : \tau. ~t,H_1,H_2,u,c)}
                 }

  \inferrule{\bs{\Sigma ~|~ \Gamma \vdash \Delta, G_1 : \exists x : \tau. ~t, G_2 : t [x \mapsto u]}{c}\\
    \type{\Sigma}{u}{\tau}}
  {\bs{\Sigma ~|~ \Gamma \vdash \Delta, G_1 : \exists x : \tau. ~t}
                 {\InstQuant(true, \tau, \lambda x : \tau. ~t,G_1,G_2,u,c)}
                 }

  \inferrule{\bs{I, \iota : 0 ~|~ \Sigma ~|~ \Gamma\vdash \Delta, G : t [\alpha \mapsto \iota]}{c}\\
    \iota \not\in I}
  {\bs{I ~|~ \Sigma ~|~ \Gamma\vdash \Delta, G : \Pi \alpha.~ t}
                  {\IntroType(\Pi \alpha. ~t, G,\iota,c)}
                  }

  \inferrule{\bs{\Gamma, H_1 : \Pi \alpha. ~t, H_2 : t [\alpha \mapsto \tau] \vdash \Delta}{c}\\
    \tau \text{ has no type variables}}
  {\bs{\Gamma, H_1 : \Pi \alpha.~ t \vdash \Delta}
                 {\InstType (\Pi \alpha. ~t, H_1,H_2,\tau,c)}
                 }

  \inferrule{~}
  {\bs{\Sigma ~|~ \Gamma \vdash \Delta, G: x = x}
    {\EqRefl(x, G)}}

  \inferrule{\bs{\Gamma, H :  a = b, P : t[b] \vdash \Delta}{c}}
  {\bs{\Gamma, H : a = b, P : t[a] \vdash \Delta}
    {\Rewrite(false, a, b, t, P, H,c)}}

  \inferrule{\bs{\Gamma, H :  a = b \vdash \Delta, P : t[b]}{c}}
  {\bs{\Gamma, H : a = b \vdash \Delta, P : t[a]}
    {\Rewrite(true, a, b, t, P,H,c)}}

  \inferrule{i \text{ is fresh w.r.t. } \Gamma, \Delta, t \\
        \type{\Sigma}{i}{int}\\
        \type{\Sigma}{a}{int}\\
    \bs{\Gamma, H_i : i \leq a \vdash \Delta, G : t[i]}{c_{base}}\\
    \bs{\Gamma, H_i : i > a, H_{rec} : \forall n : int, ~ n < i \Rightarrow t[n] \vdash \Delta, G : t[i]}{c_{rec}}}
  {\bs{\Gamma \vdash \Delta, G : t[i]}{\Induction(i, a, t, G, H_i, H_{rec}, c_{base}, c_{rec})}}
\end{mathpar}

\end{document}